\newtheorem{definition}{Definition}
\newtheorem{lemma}{Lemma}
\newtheorem{theorem}{Theorem}
\def\bs{\ensuremath\boldsymbol}
\begin{document}
\title{Optimizing Quality of Experience of Dynamic Video Streaming over Fading Wireless Networks
}

\author{Rahul Singh$^{1}$ and P. R. Kumar$^{1}$ 
\thanks{This material is based upon work partially supported by AFOSR Contract FA9550-13-1-0008,  and NSF under Contract Nos. CNS-1302182 and Science \& Technology Center Grant CCF-0939370.}
\thanks{$^{1}$Rahul Singh, and P. R. Kumar are with Texas A\&M University, College Station, TX 77840, USA.
        {\tt\small rsingh1@tamu.edu, prk@tamu.edu}}%
}

\maketitle
\IEEEpeerreviewmaketitle

\begin{abstract}
We address the problem of video streaming packets from an Access Point (AP) to multiple clients over a shared wireless channel with fading. In such systems, each client maintains a buffer of packets from which to play the video, and an outage occurs in the streaming whenever the buffer is empty. Clients can switch to a lower-quality of video packet, or request packet transmission at a higher energy level in order to minimize the number of outages plus the number of outage periods and the number of low quality video packets streamed, while there is an average power constraint on the AP.
We pose the problem of choosing the video quality and transmission power as a Constrained Markov Decision Process (CMDP). We show that the problem involving $N$ clients decomposes into $N$ MDPs, each involving only a single client, and furthermore that the optimal policy has a threshold structure, in which the decision to choose the video-quality and power-level of transmission depends solely on the buffer-level.   
\end{abstract}
\section{Introduction}
Scheduling packets for video streaming over a shared wireless downlink is of increasing attention~\cite{cisco}. Predominantly, this problem has been addressed with the goal of minimizing the average number of outages, i.e., time-slots during which a client has no packet to play~\cite{a2,a3,a1}, \cite{a4},\cite{a5,a6,a7,a8,a9}. However the models considered in these works do not incorporate the communication constraints imposed by the network over which the streaming occurs. Typically clients streaming video files will share a common wireless channel, which again typically has a constraint on the average power. The access point (AP) has to choose the power  level at which to transmit individual packets to each client so as to maximize the total Quality of Experience (QoE) experienced by the clients. The system also has an additional degree of freedom in that the AP can transmit lower quality packets on occasion, leading to a softer loss of video quality than an abrupt outage. 
Another important aspect is that the quality of video streaming experienced by a client depends not only on the number of outages, but also on the number of ``outage periods", i.e., number of interruption periods as well. Thus an outage lasting $10$ time-slots is not the same as $10$ outages each lasting $1$ time-slot.
The QoE experienced by a client thus has to take into account several metrics: the average number of outages, the average number of outage-periods, and the quality of video-packets streamed. In this paper we address this overall problem. While we focus here on the single last-hop case for ease of exposition and brevity, our results can be generalized to multi-hop networks as well. In order to provide non-interruptive video streaming experience to the clients, the AP has to guarantee some sort of service regularity to the clients, i.e., it has to ensure that the packet deliveries to the clients are not in a bursty fashion. References \cite{rahul,rahul1,Rahul2015,guosingh,rs} develop a framework to design policies which provide services to clients in a regular fashion, though not in a video streaming context. 
\section{System Description}
Consider a system where a wireless channel is shared by $N$ clients for the purpose of streaming video packets.\begin{figure}[!t]
	\centering
	\includegraphics[width=0.5\textwidth]{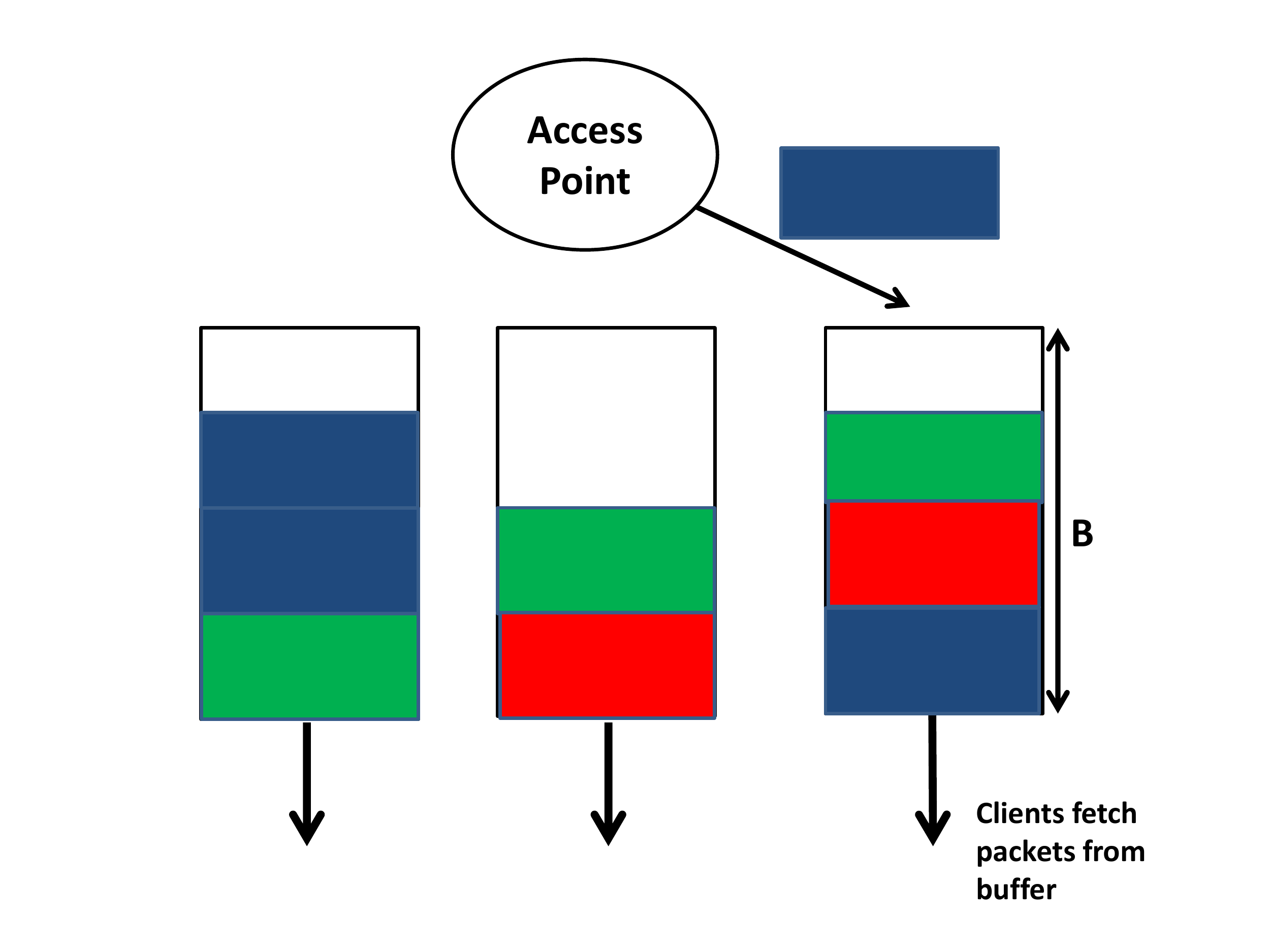}
	\caption{Clients video streaming packets from an Access Point over a shared wireless channel. $B$ denotes the buffer-size, while different colours denote packets of different video qualities.}
	\label{fig1}
\end{figure}
 It is assumed that the system evolves over discrete time-slots, and one time-slot is taken by the access point (AP) for attempting one packet transmission. 

Client $n$ maintains a buffer of size $B_n$ packets and plays a packet for a duration of $T_n$ time-slots. Once it has finished playing a video packet, it looks for the next packet in the buffer. In case the buffer is empty, there is an ``outage", meaning that the video streaming is interrupted, and the client has to wait for a packet to be delivered to its buffer before it can resume the video streaming. 

The wireless channels connecting the clients to the AP are assumed to be random. For ease of exposition, we will derive the results for the case when the channel conditions are fixed. These results carry over to the case of fading channels in a straight-forward manner. Later, in Section~\ref{fading}, we will outline the results for the case of fading channels. 

There are $Q_n$ different video qualities $\{1,2,\ldots,Q_n\}$ of packets that can be transmitted for client $n$, with class $1$ video quality providing the best viewing experience. Similarly there are $\{\hat{E}_1,\hat{E}_2,\ldots,\hat{E}_n\}$ different power levels at which the packets for client $n$ can be transmitted. We let $\hat{E}_1=0$, i.e., a user may choose to not request a packet in a time-slot. The probability that the packet for client $n$ is successfully delivered upon a transmission attempt, $P_n(q,E)$, depends on the amount of power $E$ used in the packet transmission and the quality of video packet $q$ that was attempted. We also incorporate an average power constraint on the AP. 

The basic problem considered is that of scheduling the AP's packet transmissions to clients so as to maximize the combined Quality of Experience (QoE) of the clients. The QoE of a single client depends on multiple factors
\begin{enumerate}
\item The average number of outages.
\item How ``often" the video gets interrupted, i.e., the number of outage-periods, or the number of time-slots in which the transition from ``non-outage" to outage occurs. 
\item The number of packets of different quality types that are streamed.
\end{enumerate}
\section{Problem Formulation}
We denote by $O_n(s)$ the random variable that assumes the value $1$ if the $n$-th client faces an outage at time $s$, and $0$ otherwise, and by $E_n(s)$ the transmission power utilized by the $n$-th client at time-slot $s$. Also, let $I_n(q,s)$ be the random variable that takes the value $1$ if a packet of quality $q$ is delivered to client $n$ in time-slot $s$.

The Constrained Markov Decision Process (CMDP) of interest is then to choose the quality of video packets and transmission power for each client, in order to 
\begin{align}\label{pmdp}
&\mbox{Minimize} \limsup_{t \to \infty}\frac{1}{t}\mathbb{E}\sum_n\sum_{s}\left( O_n(s)+\sum_{q=1}^{Q_n}\lambda_{q,n} I_n(q,s) \right. \notag\\
&\qquad \left.  \vphantom{\left( O_n(s)+\sum_{q=1}^{Q_n}\lambda_{q,n} I_n(q,s) \right)}+ \lambda_{O,n} \lvert O_n(s)\left(O_n(s-1)-1\right)\rvert \right)\notag\\
& \qquad\mbox{ subject to },\\
&\limsup_{t\to\infty}\frac{1}{t}\mathbb{E}\sum_n\sum_{s}E_n(s) \leq \bar{E}.
\tag{Primal MDP}
\end{align}

Note that the term $|O_n(s)\left(O_n(s-1)-1\right)| $ assumes the value $1$ if time-slot $s$ is the beginning of an outage-period for client $n$, and is $0$ otherwise. It thereby measures the number of outage periods incurred. The parameters $\left\{ \lambda_{q,n}\right\}_{q=1}^{Q_n},\lambda_{O,n}\quad n=1,2,\ldots,N$ are employed for tuning the QoS to account for the relative importance placed on each of the objectives. We note that for $i>j$, $\lambda_{i,n}> \lambda_{j,n}$ for all $n$, since we assumed that the video quality of a packet is less if the packet belongs to a higher valued class.
 
Thus the above problem is a CMDP in which the system state at time $t$ is described by the $N$ dimensional vector $L(t):=\left(l_1(t),l_2(t),\ldots,l_N(t)\right)$, where $l_n(t)$ is the amount of play time remaining in the buffer of client $n$ at time $t$.

The central difficulty which arises is that the cardinality of the state-space of the system increases exponentially with the number of clients $N$, and thus the problem is computationally infeasible as formulated above. 

We show that the problem of serving $N$ clients can be decomposed into $N$ separate problems each involving only a single client. Thus the computational complexity of the problem grows linearly in the number of clients. Moreover, we show that the optimal policy is easily implementable since it has a simple threshold structure.   
\section{The Dual MDP}
The Lagrangian associated with a policy $\pi$ for the system~\eqref{pmdp} is given by,
\begin{align}\label{lagrange}
&\mathcal{L}(\pi,\lambda_E) = \limsup_{t \to \infty}\frac{1}{t}\mathbb{E}\sum_n\sum_{s}\left( O_n(s)+\sum_{q=1}^{Q_n}\lambda_{q,n} I_n(q,s)\right. \notag\\
&\qquad\left.  \vphantom{\left( O_n(s)+\sum_{q=1}^{Q_n}\lambda_{q,n} I_n(q,s) \right)} + \lambda_{O,n} |O_n(s)\left(O_n(s-1)-1\right)|  \right)\notag\\
& + \lambda_{E}\left(\limsup_{t\to\infty}\frac{1}{t}\mathbb{E}\sum_n\sum_{s}E_n(s)-\bar{E}\right),
\end{align}
where $\lambda_E$ is the Lagrangian multiplier associated with the average power constraint.
The associated Lagrange dual is,
\begin{align}\label{dual}
D(\lambda_E)= \min_{\pi}\mathcal{L}(\pi,\lambda_E).
\end{align}
Next we present a useful bound on the dual, the proof of which follows from the super-additivity of $\lim\sup$ and sub-additivity of $\lim\inf$ operations.
\begin{lemma}\label{l1}
\begin{align}\label{bound}
&D(\lambda_E) \geq \min_{\pi} \sum_n \liminf_{t \to \infty} \frac{1}{t}\mathbb{E}\sum_{s=1}^{t} \left(  \vphantom{\left( O_n(s)+\sum_{q=1}^{Q_n}\lambda_{q,n} I_n(q,s) \right)} O_n(s)+ \lambda_{E}E_n(s) \right.\notag\\
&\left.+\lambda_{O,n} |O_n(s)\left(O_n(s-1)-1\right)| +\sum_{q=1}^{Q_n} \lambda_{q,n}I_n(q,s) \right)\notag\\
&-\lambda_{E}\bar{E}.
\end{align}
\end{lemma}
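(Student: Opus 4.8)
The plan is to peel off the two $\limsup$ terms in \eqref{lagrange} one at a time, converting the whole objective into a single $\liminf$ of a separable sum, and then to interchange that $\liminf$ with the finite sum over clients. Throughout I write $c_n(s) := O_n(s) + \sum_{q=1}^{Q_n}\lambda_{q,n}I_n(q,s) + \lambda_{O,n}|O_n(s)(O_n(s-1)-1)| + \lambda_E E_n(s)$ for the full per-client, per-slot cost, and I record the two elementary facts I will use: for real sequences $a_t,b_t$ one has $\limsup_t(a_t+b_t)\le \limsup_t a_t + \limsup_t b_t$ (subadditivity of $\limsup$) and $\liminf_t(a_t+b_t)\ge \liminf_t a_t + \liminf_t b_t$ (superadditivity of $\liminf$), together with the trivial bound $\limsup_t x_t \ge \liminf_t x_t$.

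First I would fix an arbitrary policy $\pi$ and set $A_t := \tfrac1t\mathbb{E}\sum_n\sum_{s=1}^t\big(O_n(s)+\sum_{q=1}^{Q_n}\lambda_{q,n}I_n(q,s)+\lambda_{O,n}|O_n(s)(O_n(s-1)-1)|\big)$ and $B_t := \tfrac1t\mathbb{E}\sum_n\sum_{s=1}^t E_n(s)$, so that $\mathcal{L}(\pi,\lambda_E)=\limsup_t A_t + \lambda_E\limsup_t B_t - \lambda_E\bar E$. Since $\lambda_E\ge 0$, we have $\lambda_E\limsup_t B_t = \limsup_t(\lambda_E B_t)$, and subadditivity of $\limsup$ gives $\limsup_t A_t + \limsup_t(\lambda_E B_t) \ge \limsup_t(A_t+\lambda_E B_t) \ge \liminf_t(A_t+\lambda_E B_t)$. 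By linearity of expectation $A_t+\lambda_E B_t = \tfrac1t\mathbb{E}\sum_n\sum_{s=1}^t c_n(s)$, whence $\mathcal{L}(\pi,\lambda_E)\ge \liminf_t \tfrac1t\mathbb{E}\sum_n\sum_{s=1}^t c_n(s) - \lambda_E\bar E$.

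Next I would move the $\liminf$ inside the finite sum over the $N$ clients. Writing the argument as $\sum_n \big(\tfrac1t\mathbb{E}\sum_{s=1}^t c_n(s)\big)$ and applying superadditivity of $\liminf$ a total of $N-1$ times yields $\liminf_t\sum_n(\cdots)\ge \sum_n\liminf_t(\cdots)$, so that $\mathcal{L}(\pi,\lambda_E)\ge \sum_n\liminf_t\tfrac1t\mathbb{E}\sum_{s=1}^t c_n(s) - \lambda_E\bar E$. This holds for every $\pi$; since a pointwise inequality $\mathcal{L}(\pi,\lambda_E)\ge f(\pi)$ is preserved under minimization (if $\pi^\star$ attains $\min_\pi\mathcal{L}$ then $\min_\pi\mathcal{L}=\mathcal{L}(\pi^\star,\lambda_E)\ge f(\pi^\star)\ge \min_\pi f$), taking $\min_\pi$ of both sides gives $D(\lambda_E)=\min_\pi\mathcal{L}(\pi,\lambda_E)\ge \min_\pi \sum_n\liminf_t\tfrac1t\mathbb{E}\sum_{s=1}^t c_n(s) - \lambda_E\bar E$, which is exactly \eqref{bound}.

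The step I expect to be the only real subtlety is the first: the two manipulations (subadditivity of $\limsup$, then dropping from $\limsup$ to $\liminf$) both relax the objective, so the chain is legitimate precisely because we seek a one-sided lower bound, and because $\lambda_E\ge 0$ lets the multiplier pass through $\limsup$ without reversing the inequality. Everything else — linearity of $\mathbb{E}$, the finiteness of the client sum that validates the $\liminf$ interchange, and the monotonicity of $\min_\pi$ under pointwise domination — is routine and needs no integrability or convergence hypotheses beyond those already implicit in \eqref{pmdp}.
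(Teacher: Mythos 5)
Your proof is correct and follows exactly the route the paper sketches for Lemma~\ref{l1}: combine the two $\limsup$ terms via subadditivity, drop to $\liminf$, split over the finite client sum via superadditivity, and pass the pointwise bound through $\min_\pi$. The only addition worth keeping is your explicit note that $\lambda_E \geq 0$ is needed to pull the multiplier inside the $\limsup$, which the paper leaves implicit.
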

\section{Single Client Problem} \label{scp}
We consider minimizing the bound obtained in Lemma~\ref{l1}. Observing the bound, we find that we have decomposed the original problem~\eqref{pmdp} into $N$ single-client problems, i.e., the expression in the r.h.s. of~\eqref{bound} is the sum of the costs of $N$ clients, in which the cost of a single client depends only on the action chosen for it in each time-slot. 

The problem for the single client is described as follows. We omit the sub-script $n$ in the following discussion. The channel connecting the client to the AP is random. The client maintains a buffer of capacity $B$ time-slots of play-time video (this assumption is equivalent to the assumption of maintaining a buffer of $B$ packets since a packet is played for $T$ time-slots), and in each time-slot, the AP has to choose two quantities, which together comprise the control action chosen for the client:
\begin{itemize}
\item The video quality $ q \in \{1,2,\ldots,Q\}$. 
\item The power $E\in \{\hat{E}_1,\hat{E}_2,\ldots,\hat{E}_n\}$ at which to carry out packet transmission.
\end{itemize}
The state of the client is thus described by $l(t)$, the play-time duration of the packets present in the buffer at time $t$. If the client is scheduled a packet transmission of quality $q$ at an power $E$ at time $t$, and the remaining playtime at time $t$, $l(t)$, is less than or equal to $ B-T+1$, then the system state at time $t+1$ is $(l(t)-1)^{+}+T$ with a probability $P(q,E)$, while it is $(l(t)-1)^{+}$ with a probability $P(q,E)$. However if the value of remaining playtime $l(t)$ is strictly greater than $B-T+1$, then the system state at time $t+1$ is $l(t)-1$ with a probability $1$.

We let  
\begin{align}\label{sf}
&\mathcal{S}(x) := 
\begin{cases}
(x-1)^{+}+T,\mbox{ if } x \leq B-T+1,\\
x-1,\mbox{ if } B-T+1<x\leq B, 
\end{cases}\\
&\mathcal{F}(x) := (x-1)^{+},
\end{align}
be the transitions associated with the remaining play-times associated for a successful and failed packet transmission respectively. The control action at time $t$ will be denoted $\boldsymbol{u}(t):=(q(t),E(t))$, where $q(t),E(t)$ are the video quality and transmission power level chosen at time $t$.

The transmissions at power level $E$ incur a cost of $\lambda_{E} \times E$. There is a penalty of $1$ unit upon an outage at time $t$. A penalty of $\lambda_{q}$ units is imposed if a packet of quality $q$ is delivered to it, while a penalty of $\lambda_{O}$ units is imposed at time $t$ in case there was no outage at time-slot $t-1$, and an outage occurs in time-slot $t$, i.e. if a new outage-period begins at time $t$.

Since the probability distribution of the system state at time $t+1$ is completely determined by the system state at time $t$, and the action $\left(q,E\right)$ chosen at time $t$, i.e., requested video quality and power level at which transmission occurs, the single client problem is a Markov Decision Process (MDP) involving only a finite number of actions and states, and is thus solved by a stationary Markov policy~\cite{a12}. 

Denote by $\pi_n$ a policy for the client $n$. The single client problem is to solve,
\begin{align}
&\min_{\pi} \liminf_{t \to \infty} \frac{1}{t}\mathbb{E}\sum_{s=1}^{t} \left(\vphantom{lambda_{O} |O(s)\left(O(s-1)-1\right)| +\sum_l \lambda_{l}I(l,s)} O(s)+ \lambda_{E}E(s) \right.\notag\\
&\left.+\lambda_{O} |O(s)\left(O(s-1)-1\right)| +\sum_{q=1}^{Q} \lambda_{q}I(q,s) \right).
\end{align}
Denote by $\pi^{\star}_n(\lambda_E)$, the optimal policy which solves the single client problem. 
We also let
\begin{align}\label{scopol}
V_n(\lambda_E) = &\min_{\pi} \liminf_{t \to \infty} \frac{1}{t}\mathbb{E}\sum_{s=1}^{t} \left( \vphantom{lambda_{O} |O(s)\left(O(s-1)-1\right)| +\sum_l \lambda_{l}I(l,s)} O(s)+ \lambda_{E}E(s) \right.\notag\\
&\left.+\lambda_{O} |O(s)\left(O(s-1)-1\right)| +\sum_{q=1}^{Q} \lambda_{q}I(q,s) \right),
\end{align}
be the optimal cost, and $V_n(\lambda_E,\pi)$ be the cost associated with a policy $\pi$.
\section{Threshold Structure of the Optimal Policy for the Single Client Problem}
We will suppress the subscript $n$ in the following discussion, and begin with a discussion of the $\beta\in\left(0,1\right)$ discounted infinite horizon cost problem for the single client. Let
\begin{align}\label{disc}
& V_\beta(x) = \min_{\pi}\liminf_{t\to\infty}\mathbb{E}\left[ \sum_{t=0}^{\infty}\beta^t\left(O(t)+\lambda_E E(t)\right.\right.\notag\\ 
& \left.\left. \qquad+\lambda_O |O(t)\left(O(t-1)-1\right)|+\sum_{q=1}^{Q}\lambda_q I(q,s) \right)\right]
\end{align} 
be the minimum $\beta$-discounted infinite horizon cost for the system starting in state $x$ at time $0$, where $x$ can assume values in the set $\{0,1,\ldots,B\}$. The function $V^s_\beta(x)$ is similarly defined to be the minimum $\beta$-discounted cost incurred in $s$ time-slots for the system starting in state $x$, i.e.,
\begin{align*}
& V^s_\beta(x) = \min_{\pi^s}\mathbb{E}_{x}\left[ \sum_{t=0}^{s}\beta^t\left(O(t)+\lambda_E E(t)\right.\right.\\ 
& \left.\left. \qquad+\lambda_O |O(t)\left(O(t-1)-1\right)|+\sum_{q=1}^{Q} \lambda_q I(q,s) \right)\right],
\end{align*} 
where $\pi^s$ is a policy for the $s$ horizon $\beta$-discounted problem. The quantities $ V_\beta(x),V^s_\beta(x)$ should not be confused with the quantities $V_n(\lambda_E)$ defined in the previous section.
We have,
\begin{align}\label{eq:1}
&V^{s}_\beta(x)  = \min_{(q,E)}1(x=0)+ \lambda_E E\notag \\
&+P(q,E)\left[\lambda_q+\beta V^{s-1}_\beta(\mathcal{S}(x)) \right]\notag\\
&+\left(1-P(q,E)\right)\left[1(x=1)\lambda_O+\beta V^{s-1}_\beta(\mathcal{F}(x)) \right]\notag\\
& =1(x=0) + 1(x=1)\lambda_O+ \left[\beta V^{s-1}_\beta(\mathcal{F}(x)) \right]\notag\\
&+ \min_{\boldsymbol{u}}  \{C(\boldsymbol{u})-P(\boldsymbol{u}) D^\beta_s(x)\},
\end{align}
where
\begin{align}\label{cost}
C(\boldsymbol{u}):=\lambda_E E + P(q,E)\lambda_q,
\end{align}
is the one-step cost associated with the action $\boldsymbol{u}=(q,E)$, and for $s=1,2,\ldots$,
\begin{align}\label{df}
D^\beta_{s}(x):=1(x=1)\lambda_O + \beta\left\{ V_\beta^{s-1}(\mathcal{F}(x))-V_\beta^{s-1}(\mathcal{S}(x))\right\}. 
 \end{align}
We assume that a lower video quality packet, or a higher power packet transmission, leads to an increase in the success of packet transmission $P(q,E)$, i.e., an increase in cost is associated with a higher transmission success probability. 
\begin{definition}\notag
We say a policy is of threshold-type if it satisfies the following for each stage $s$:
\begin{itemize}
\item Fix any $E \in \{\hat{E}_1,\hat{E}_2,\ldots,\hat{E}_n\}$. If the policy chooses the action $\left(q,E\right)$ in state $x$, then it does not choose the actions $\{\left(\hat{q},E\right): \hat{q} < q \}$ for any state $1\leq y\leq x$.             \item Fix any $q \in \{Q_1,Q_2,\ldots,Q_n\}$. If the policy chooses the action $\left(q,E\right)$ in state $x$, then it does not choose the actions $\{(q,\tilde{E}): \tilde{E} < E \}$ for any state $1\leq y\leq x$.             \end{itemize}
\end{definition}

If $x,y \in \{1,2,\ldots,B\}$ are such that $x>y$, let $\bs{u}_x,\bs{u}_y$ be the actions chosen by a threshold policy $\pi$ in states $x$ and $y$. Then it is easily verified that $P(\bs{u}_x)<P(\bs{u}_y)$.

Next we present a useful lemma that is easily proved. In the following, $(\bs{u},\pi)$ is the policy that follows the action $\bs{u}$ in the first slot, and then follows policy $\pi$, while $V^{s,\pi}_\beta(x)$ is the cost achieved under the policy $\pi$ in $s$ time-slots for the system starting in state $x$. 
\begin{lemma}\label{l2}
Let $\bs{u}_1,\bs{u}_2$ be two actions where $P(\bs{u}_2)>P(\bs{u}_1)$, or equivalently, $P(\bs{u}_2)>P(\bs{u}_1)$. Then,
\begin{align*}
& V^{s,(\bs{u}_2,\pi^{\star})}_\beta(\mathcal{F}(x))-V^{s,(\bs{u}_1,\pi^{\star})}_\beta(\mathcal{S}(x)) = \\
& P(\bs{u}_1) \left\{\beta V^{s-1}_\beta(\mathcal{S}(\mathcal{F}(x)))-V^{s-1}_\beta(\mathcal{S}(\mathcal{S}(x)))\right\} \\
&+(1-P(\bs{u}_2)) \left\{  1(\mathcal{F}(x)=1)\lambda_O + \beta V^{s-1}_\beta(\mathcal{F}(\mathcal{F}(x)))\right. \\
& \left. -V^{s-1}_\beta(\mathcal{F}(\mathcal{S}(x)))\right\}+C(\bs{u}_2)-C(\bs{u}_1)\\
&= P(\bs{u}_1) \left\{ \beta V^{s-1}_\beta(\mathcal{F}(\mathcal{S}(x)))-V^{s-1}_\beta(\mathcal{S}(\mathcal{S}(x)))\right\} \\
&+(1-P(\bs{u}_2)) \left\{ 1(\mathcal{F}(x)=1)\lambda_O + \beta V^{s-1}_\beta(\mathcal{F}(\mathcal{F}(x)))\right. \\
& \left. -V^{s-1}_\beta(\mathcal{S}(\mathcal{F}(x)))\right\}+C(\bs{u}_2)-C(\bs{u}_1).
\end{align*}
 \end{lemma}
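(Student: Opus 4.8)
The plan is to reduce the claimed identity to a single application of the one-step cost recursion that underlies the Bellman equation~\eqref{eq:1}, followed by one algebraic regrouping justified by a commutativity property of the buffer-update maps $\mathcal{S}$ and $\mathcal{F}$. The starting point is the decomposition of the finite-horizon policy cost: for any action $\bs{u}=(q,E)$ and any state $y$, the policy that plays $\bs{u}$ in the first slot and then follows $\pi^\star$ satisfies
\begin{align*}
V^{s,(\bs{u},\pi^{\star})}_\beta(y) &= 1(y=0)+C(\bs{u})+(1-P(\bs{u}))1(y=1)\lambda_O \\
&\quad +\beta\left[P(\bs{u})V^{s-1}_\beta(\mathcal{S}(y))+(1-P(\bs{u}))V^{s-1}_\beta(\mathcal{F}(y))\right].
\end{align*}
This is exactly~\eqref{eq:1} with the inner minimization removed and the continuation replaced by the optimal cost $V^{s-1}_\beta$, using $C(\bs{u})=\lambda_E E+P(q,E)\lambda_q$ from~\eqref{cost}.

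First I would substitute $(\bs{u},y)=(\bs{u}_2,\mathcal{F}(x))$ and $(\bs{u},y)=(\bs{u}_1,\mathcal{S}(x))$ into this decomposition and subtract. The first-slot cost terms combine into $C(\bs{u}_2)-C(\bs{u}_1)$; the outage and new-outage-period indicators combine into boundary terms in $1(\mathcal{F}(x)=0)$, $1(\mathcal{S}(x)=0)$, $1(\mathcal{F}(x)=1)$ and $1(\mathcal{S}(x)=1)$; and the discounted continuations become $V^{s-1}_\beta$ evaluated at the four twice-composed states $\mathcal{S}(\mathcal{F}(x))$, $\mathcal{F}(\mathcal{F}(x))$, $\mathcal{S}(\mathcal{S}(x))$ and $\mathcal{F}(\mathcal{S}(x))$. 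On the range of $x$ over which the lemma is invoked a successful transmission leaves at least two slots of play-time, so $1(\mathcal{S}(x)=0)=1(\mathcal{S}(x)=1)=0$, and likewise $1(\mathcal{F}(x)=0)=0$; only the term $(1-P(\bs{u}_2))1(\mathcal{F}(x)=1)\lambda_O$ survives, already attached to the weight $(1-P(\bs{u}_2))$ that appears in the statement.

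The step that does the real work is the regrouping of the discounted continuations. The honest subtraction leaves the two ``mixed'' values $V^{s-1}_\beta(\mathcal{S}(\mathcal{F}(x)))$ and $V^{s-1}_\beta(\mathcal{F}(\mathcal{S}(x)))$ carrying the unequal weights $P(\bs{u}_2)$ and $1-P(\bs{u}_1)$, whereas the displayed form collects them under the single weights $P(\bs{u}_1)$ and $1-P(\bs{u}_2)$ (reading the $\beta$ in each brace as the discount applied to the whole continuation difference). The two groupings differ by precisely $\beta\,(P(\bs{u}_1)-P(\bs{u}_2))\{V^{s-1}_\beta(\mathcal{S}(\mathcal{F}(x)))-V^{s-1}_\beta(\mathcal{F}(\mathcal{S}(x)))\}$, so they coincide once I invoke the identity $\mathcal{S}(\mathcal{F}(x))=\mathcal{F}(\mathcal{S}(x))$. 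That same identity yields the second displayed equality for free, since it is obtained from the first merely by interchanging $\mathcal{S}(\mathcal{F}(x))$ and $\mathcal{F}(\mathcal{S}(x))$.

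The hard part will be establishing the commutativity $\mathcal{S}(\mathcal{F}(x))=\mathcal{F}(\mathcal{S}(x))$, together with the vanishing of the boundary indicators, on exactly the correct set of states. From the piecewise definition~\eqref{sf} the two compositions do agree in the interior, but they can fail to agree at the boundaries: where the floor $(\cdot)^{+}$ is active for small $x$, and near the cap at $B$ and the branch switch at $x=B-T+1$. The genuine content of Lemma~\ref{l2}, therefore, is not the algebra but the verification that it is applied only to states for which both the commutativity holds and the discarded indicators are zero; once that range is fixed, both displayed equalities follow from the elementary substitution and regrouping above.
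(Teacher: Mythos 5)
The paper offers no proof of Lemma~\ref{l2} at all --- it is introduced only as ``a useful lemma that is easily proved'' --- so there is nothing to match your argument against except the intended one-step expansion, which is exactly what you carry out. Your computation is correct: writing $V^{s,(\bs{u},\pi^{\star})}_\beta(y)$ via the un-minimized form of~\eqref{eq:1}, subtracting the two instances, and regrouping gives the stated identity precisely when $V^{s-1}_\beta(\mathcal{S}(\mathcal{F}(x)))=V^{s-1}_\beta(\mathcal{F}(\mathcal{S}(x)))$, since the honest and displayed groupings differ by $\beta\left(P(\bs{u}_2)-P(\bs{u}_1)\right)\left\{V^{s-1}_\beta(\mathcal{S}(\mathcal{F}(x)))-V^{s-1}_\beta(\mathcal{F}(\mathcal{S}(x)))\right\}$ and $P(\bs{u}_2)>P(\bs{u}_1)$ by hypothesis; the second displayed equality is likewise equivalent to that same interchange. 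Your identification of the hidden hypothesis is the genuinely valuable part: from~\eqref{sf} one checks $\mathcal{S}(\mathcal{F}(x))=\mathcal{F}(\mathcal{S}(x))=x+T-2$ for $x\in\{2,\ldots,B-T+1\}$, but commutativity fails at $x\in\{0,1\}$ (where $\mathcal{S}(\mathcal{F}(x))=T$ while $\mathcal{F}(\mathcal{S}(x))=T-1$) and for $x>B-T+1$, and the surviving indicator analysis requires $x\geq 2$ and $T\geq 2$. This is consistent with how the lemma is actually used in the proof of Lemma~\ref{l3}, where it is invoked only at arguments $x+1$ with $x\in\{1,\ldots,B-T\}$ and the case $D_{s+1}(1)$ versus $D_{s+1}(2)$ is handled by a separate coupling argument; so your proof is correct as long as the lemma is read as restricted to that range, and you have correctly flagged that the paper's unqualified statement is false outside it. You also silently (and correctly) repair two typographical slips in the statement --- the duplicated condition ``$P(\bs{u}_2)>P(\bs{u}_1)$, or equivalently, $P(\bs{u}_2)>P(\bs{u}_1)$'' and the $\beta$ that should multiply the entire bracketed difference as in~\eqref{df} rather than only its first term.
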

\begin{lemma}\label{l3}
For $s=1,2,\ldots$, the functions $D^\beta_{s}(x)$ are decreasing in $x$ for $x\in \{1,2,\ldots,B-T+1\}$.
\end{lemma}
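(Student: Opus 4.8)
The plan is to argue by backward induction on the horizon $s$ (value iteration on \eqref{eq:1}), reducing the claim to a monotonicity statement about a difference of value functions. On the stated range $x\in\{1,\ldots,B-T+1\}$ we have $\mathcal{F}(x)=x-1$ and $\mathcal{S}(x)=x-1+T$, so \eqref{df} reads
\[
D^\beta_s(x)=1(x=1)\lambda_O+\beta\,G^{s-1}(x),\qquad G^{r}(x):=V^{r}_\beta(\mathcal{F}(x))-V^{r}_\beta(\mathcal{S}(x)).
\]
Because $\lambda_O\ge 0$ only raises the value at the single point $x=1$, it suffices to prove that the spread difference $G^{r}(x)$ is non-increasing in $x$ on this range for every $r$, and then to check the lone inequality $D^\beta_s(1)\ge D^\beta_s(2)$ by hand (where the extra $\lambda_O$ at $x=1$ works in our favour). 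The base case is immediate once the terminal function is fixed, say $V^0_\beta\equiv 0$, giving $G^0\equiv 0$ and $D^\beta_1(x)=1(x=1)\lambda_O$.

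For the inductive step I would unroll one stage using Lemma~\ref{l2}: taking $\bs{u}_2$ and $\bs{u}_1$ to be the optimal first actions of the $s$-horizon problem at the states $\mathcal{F}(x)$ and $\mathcal{S}(x)$, its left-hand side is exactly $G^{s}(x)$, while its right-hand side is a convex combination, with weights $P(\bs{u}_1)$ and $1-P(\bs{u}_2)$, of lower-order spread differences $G^{s-1}$ evaluated at the doubly-transitioned states, plus the one-step cost gap $C(\bs{u}_2)-C(\bs{u}_1)$ and an outage-indicator term. The identity $\mathcal{S}(\mathcal{F}(x))=\mathcal{F}(\mathcal{S}(x))$ that underlies the two equivalent forms in Lemma~\ref{l2} shows that these transitioned arguments advance monotonically with $x$, so by the induction hypothesis each $G^{s-1}$ term is non-increasing; it then remains only to sign the weights and the cost gap. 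To see where the difficulty concentrates, note that if there were a single admissible action the bracket in \eqref{eq:1} would, for each state, be a fixed convex combination of two shifts of $V^{s-1}_\beta$, and such combinations preserve the monotone-spread property directly; the entire obstruction comes from the $\min$ over actions.

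The main obstacle is therefore twofold. First, controlling the $\min_{\bs{u}}$ requires knowing that the minimizing action moves monotonically in the state, i.e.\ that $P(\bs{u}_x)$ is non-increasing in $x$ (the property stated, for threshold policies, just before Lemma~\ref{l2}). Since this monotone-action structure is itself the threshold result being bootstrapped, I would run a joint induction, carrying ``the $(s-1)$-horizon minimizer has non-increasing success probability'' as a companion hypothesis, so that Lemma~\ref{l2} applies with $P(\bs{u}_2)\ge P(\bs{u}_1)$ and the cost gap $C(\bs{u}_2)-C(\bs{u}_1)$ can be signed via the standing assumption that higher success probability costs more. Second, the monotonicity must be propagated across the two transition regimes rather than in the interior: near the empty buffer the clipping $\mathcal{F}(x)=(x-1)^+$ together with the isolated bumps $1(x=0)$ and $1(x=1)\lambda_O$ can break the inequality at $x=1$ and must be checked directly, exploiting the cancellation between the $+1$ at $x=0$ and the $+\lambda_O$ at $x=1$; and near the full buffer $\mathcal{S}$ switches from $x\mapsto x-1+T$ to $x\mapsto x-1$, so $\mathcal{S}\circ\mathcal{S}$ changes form at $x=B-2T+2$, which is the only place a lower-order argument can leave the range where the inductive hypothesis is available, forcing a separate check on either side of that threshold.
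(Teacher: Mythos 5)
Your plan follows essentially the same route as the paper's proof: induction on the horizon $s$, carrying the threshold (monotone-success-probability) structure of the stage-$s$ minimizer as a companion consequence of the induction hypothesis, unrolling one stage via Lemma~\ref{l2} with cross-substituted suboptimal actions to compare $V^s_\beta(\mathcal{F}(\cdot))-V^s_\beta(\mathcal{S}(\cdot))$ at $x$ and $x+1$, and a separate hand check of $D^\beta_{s+1}(1)\ge D^\beta_{s+1}(2)$ using the $1(x=0)$ and $\lambda_O$ bumps. The only piece you leave unexecuted is the explicit bound $V^s_\beta(T)-V^s_\beta(T+1)\le 1+\lambda_O-\beta\lambda_O$ at that boundary, which the paper obtains by a coupling argument; otherwise your outline matches the paper's proof.
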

\begin{proof}
Within this proof, let $\pi_s^{\star}$ be the optimal policy for the $\beta$-discounted $s$ time-slots problem, and let $(\boldsymbol{u},\pi_{s-1}^{\star})$ be the policy for $s$ time-slots which takes the action $\boldsymbol{u}$ at the first time-slot, and then follows the policy $\pi_{s-1}^{\star}$. In order to prove the claim, we will use induction on $s$, the number of time-slots.

Let us assume that the statement is true for the functions $D^\beta_{z}(x)$, for all $z\leq s$. In particular the function, 
\begin{align}\label{assum1}
1(x=1)\lambda_O + \beta\left\{ V_\beta^{s-1}(\mathcal{F}(x))-V_\beta^{s-1}(\mathcal{S}(x))\right\}, 
\end{align}
is decreasing for $x\in \{1,2,\ldots,B-T+1\}$. 

First we will prove the decreasing property for $x\in \{2,3,\ldots,B-T+1\}$.
Now the assumption~\eqref{assum1} made above, and~\eqref{eq:1}, together imply that $\pi_s^{\star}$ is of threshold-type.

 Fix an $x \in \{1,2,\ldots,B-T\}$ and denote by $\boldsymbol{u}_1,\boldsymbol{u}_2,\boldsymbol{u}_3,\boldsymbol{u}_4$, the optimal actions at stage $s$ for the states $\mathcal{S}(x),\mathcal{F}(x),\mathcal{S}(x+1),\mathcal{F}(x+1)$ respectively. Note that the threshold nature of $\pi_s^{\star}$ implies that, 
\begin{align*}
&P(\boldsymbol{u}_1)<P(\boldsymbol{u}_2), P(\boldsymbol{u}_3)<P(\boldsymbol{u}_4)\mbox{ and },\\
&P(\boldsymbol{u}_3)<P(\boldsymbol{u}_1), P(\boldsymbol{u}_4)<P(\boldsymbol{u}_2).
\end{align*}
This is true because as the value of state decreases in the interval $\{1,2,\ldots,B\}$, a threshold policy switches to an action that has a higher transmission success probability. So it follows from Lemma~\ref{l2} that
\begin{align*}
&V_{\beta}^{s}(\mathcal{F}(x+1))-V_{\beta}^{s}(\mathcal{S}(x+1))\\
&\leq V_{\beta}^{s,(\boldsymbol{u}_2,\pi_{s-1}^{\star})}(\mathcal{F}(x+1))-V_{\beta}^{s}(\mathcal{S}(x+1))\\
& =C(\boldsymbol{u}_2) -C(\boldsymbol{u}_3) \\
&+ P_{c}(\boldsymbol{u}_3)\times \beta \left[V_\beta^{s-1}(\mathcal{F}(\mathcal{S}(x+1)))-V_\beta^{s-1}(\mathcal{S}(\mathcal{S}(x+1)))\right]\\
&+\left(1-P_{c}(\boldsymbol{u}_2)\right)\times \\
& \left\{1(\mathcal{F}(x+1)=1)+ \beta V_\beta^{s-1}(\mathcal{F}(\mathcal{F}(x+1))) \right. \\
&\qquad\qquad\left. - V_\beta^{s-1}(\mathcal{S}(\mathcal{F}(x+1))) \right\}\\
&\leq C(\boldsymbol{u}_2) -C(\boldsymbol{u}_3) \\
&+ P_{c}(\boldsymbol{u}_3)\times \beta \left[V_\beta^{s-1}(\mathcal{S}(\mathcal{F}(x)))-V_\beta^{s-1}(\mathcal{S}(\mathcal{S}(x)))\right]\\
&+\left(1-P_{c}(\boldsymbol{u}_2)\right)\times \\
&\left[1(\mathcal{F}(x)=1)+ \beta V_\beta^{s-1}(\mathcal{F}(\mathcal{F}(x))) - V_\beta^{s-1}(\mathcal{S}(\mathcal{F}(x)))\right]\\
&\leq V_{\beta}^{s}(\mathcal{F}(x))-V_{\beta}^{s}(\mathcal{S}(x)),
\end{align*}
where the first inequality follows since a sub-optimal action in the state $\mathcal{F}(x+1)$ increases the cost-to-go for $s$ time-slots, the second inequality is a consequence of the assumption that the functions $V_\beta^{s-1}(\mathcal{F}(x))-V_\beta^{s-1}(\mathcal{S}(x))$ are decreasing in $x$, while the last inequality follows from the fact that a sub-optimal action in the state $\mathcal{S}(x)$ will increase the cost-to-go for $s$ time-slots. Thus we have proved the decreasing property of $D^\beta_{s+1}(\cdot)$ for $x\in \{2,3,\ldots,B-T+1\}$, and it remains to show that $D^\beta_{s+1}(1)>D^\beta_{s+1}(2)$.

Once again, let $\boldsymbol{u}_1,\boldsymbol{u}_2,\boldsymbol{u}_3,\boldsymbol{u}_4$ be the optimal actions at stage $s$ for the states $T,0,T+1,1$ respectively. Using the same argument as above (i.e., assuming that the actions taken in stage $s$ at states $T,T+1$ are the same, and the actions taken in the states $0,1$ are the same), it follows that
\begin{align*}
& D_{s+1}(1)-D_{s+1}(2)\geq\\
& \qquad \left(1+\lambda_O -\beta \lambda_O \right)- \left(V^s_\beta(T)-V^s_\beta(T+1)\right).
\end{align*}
However, then $V^s_\beta(T)-V^s_\beta(T+1) \leq 1+\lambda_O -\beta \lambda_O$ (for $s$ stages, apply the same actions for the system starting in state $T$, as that for a system starting in state $T+1$, and note that the two systems couple at a stage $t-1$, when the latter system hits the state $1$ at any stage $t$; the hitting stage is of course random). This gives us,
\begin{align*}
& D_{s+1}(1)-D_{s+1}(2)\geq 0,
\end{align*}
and thus we conclude that the function $D_{s+1}(x)$ is decreasing for $x\in \{1,2,\ldots,B\}$. In order to complete the proof, we notice that for $s=1$, we have,
\begin{align*}
D_1^{\beta}(x) = 1(x=1)\lambda_O,
\end{align*}
and thus the assertion of Lemma is true for $s=1$.
\end{proof}
\begin{theorem}\label{t1}
Consider the single client problem discussed in Section~\ref{scp}. There is a threshold policy that is Blackwell optimal~\cite{blackwell}, i.e., it is optimal for all values of $\beta\in (\hat{\beta},1)$ for some $\hat{\beta}\in (0,1)$, and is also optimal for the Average cost problem. Thus $\pi_n^{\star}(\lambda_E)$ is of threshold-type and can be obtained in time $O(B^{E\times Q})$ via comparing the costs of all threshold-type policies.
\end{theorem}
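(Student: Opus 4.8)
The plan is to build the claim in four stages: (i) prove the threshold property for every finite-horizon $\beta$-discounted problem using Lemma~\ref{l3}, (ii) lift it to the infinite-horizon discounted problem by a limiting argument, (iii) extract a \emph{single} threshold policy that is optimal for all $\beta$ near $1$ (Blackwell optimality), and (iv) deduce average-cost optimality and the complexity bound. For stage (i), fix $s$ and $\beta$ and read off from~\eqref{eq:1} that the optimal action at state $x$ minimizes $g(\boldsymbol{u},x):=C(\boldsymbol{u})-P(\boldsymbol{u})D^\beta_s(x)$. By Lemma~\ref{l3} the map $D^\beta_s(\cdot)$ is decreasing on $\{1,\ldots,B-T+1\}$, and by the standing assumption the one-step cost $C(\boldsymbol{u})$ is increasing in the success probability $P(\boldsymbol{u})$. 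Hence $g$ has increasing differences in the pair $(P(\boldsymbol{u}),x)$: for $P(\boldsymbol{u}')>P(\boldsymbol{u})$ and $x'>x$ the cross-difference $[g(\boldsymbol{u}',x')-g(\boldsymbol{u},x')]-[g(\boldsymbol{u}',x)-g(\boldsymbol{u},x)]$ equals $(P(\boldsymbol{u}')-P(\boldsymbol{u}))(D^\beta_s(x)-D^\beta_s(x'))\ge 0$. A standard monotone-comparative-statics (Topkis) argument then makes the minimizing success probability $P(\boldsymbol{u}^\star(x))$ non-increasing in $x$, which is exactly the threshold property of the Definition (and is the conclusion already invoked inside the proof of Lemma~\ref{l3}).

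For stage (ii), since the state and action spaces are finite, value iteration yields $V^s_\beta\to V_\beta$, and therefore $D^\beta_s(x)\to D^\beta(x):=1(x=1)\lambda_O+\beta\{V_\beta(\mathcal{F}(x))-V_\beta(\mathcal{S}(x))\}$ pointwise. A pointwise limit of functions decreasing in $x$ is decreasing, so $D^\beta(\cdot)$ is decreasing on $\{1,\ldots,B-T+1\}$, and the same supermodularity argument applied to the stationary Bellman equation forces the discounted-optimal stationary policy to be of threshold-type for every $\beta\in(0,1)$.

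For stage (iii), I would use that for a finite MDP the value $V_\beta(x)$ of any fixed stationary policy is a rational function of $\beta$, so for any two stationary policies the sign of the difference of their values changes only finitely often on $(0,1)$. Restricting to the \emph{finite} collection of threshold policies, there is then a $\hat\beta\in(0,1)$ and one threshold policy whose value is smallest among all threshold policies throughout $(\hat\beta,1)$; since by stage (ii) the overall discounted optimum is attained by some threshold policy for each such $\beta$, this single policy is optimal for all $\beta\in(\hat\beta,1)$, i.e.\ Blackwell optimal. Average-cost optimality follows from the Tauberian relation $(1-\beta)V_\beta(x)\to V_n(\lambda_E)$ as $\beta\uparrow 1$: minimizing $V_\beta$ for $\beta$ near $1$ forces minimization of the limiting average cost, so $\pi^\star_n(\lambda_E)$ may be taken of threshold-type. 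Finally, a threshold policy is determined by at most one switching buffer-level for each of the $Q\times E$ actions, giving $O(B^{Q\times E})$ candidate policies; evaluating the average cost of each (a finite-state Markov-chain computation) and retaining the best yields the stated $O(B^{E\times Q})$ search.

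I expect the main obstacle to be stage (iii): it is not enough that each $\beta$-discounted optimum is threshold; one must produce a \emph{single} threshold policy simultaneously optimal on a whole left-neighborhood of $\beta=1$. This is precisely where the finiteness of the threshold-policy class combines with the rational dependence of discounted values on $\beta$. Some additional care is also needed at the boundary states $x\in\{0,1\}$ and $x>B-T+1$, which Lemma~\ref{l3} and the transition map $\mathcal{S}$ treat separately, to confirm that the monotonicity of $D^\beta(\cdot)$ used in the Topkis step indeed covers the states where the controlled action is actually exercised.
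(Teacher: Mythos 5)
Your proposal is correct and follows essentially the same route as the paper: the core step is the identical supermodularity computation, showing the cross-difference $\left(P(\boldsymbol{u}_1)-P(\boldsymbol{u}_2)\right)\left(D^\beta_s(y)-D^\beta_s(x)\right)\geq 0$ via Lemma~\ref{l3}, so that actions with higher success probability are preferred at lower buffer levels. Your stages (ii)--(iv) simply spell out the standard finite-MDP machinery (value-iteration limits, rationality of discounted values in $\beta$, and the Abelian limit to average cost) that the paper compresses into its closing sentence about Blackwell optimality and finiteness of the state space.
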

\begin{proof}
Fix a $q$ and let $E_i,E_j, i>j$ be two power levels. Without loss of generality, let $\bs{u}_1=(q,E_i),\bs{u}_2=(q,E_j)$. Clearly $C(\bs{u}_1)>C(\bs{u}_2)$~\eqref{cost}. In the Bellman equation~\eqref{eq:1}, consider the term depending on $\bs{u}$, i.e. the term $C(\boldsymbol{u})-P(\boldsymbol{u}) D^\beta_s(x)$. For $x,y\in \{1,2,\ldots,B-T+1\}$, $x>y$, we have,
\begin{align*}
& C(\boldsymbol{u}_1)-P(\boldsymbol{u}_1) D^\beta_s(x) - \left(C(\boldsymbol{u}_2)-P(\boldsymbol{u}_2) D^\beta_s(x)\right)\\
&-\{C(\boldsymbol{u}_1)-P(\boldsymbol{u}_1) D^\beta_s(y) - \left(C(\boldsymbol{u}_2)-P(\boldsymbol{u}_2) D^\beta_s(y)\right)\}\\
&=\left(P(\boldsymbol{u}_1)  - P(\boldsymbol{u}_2)\right) \left(D^\beta_s(y)-D^\beta_s(x)\right)\\
&\geq 0,
\end{align*}
where the last inequality follows from Lemma~\ref{l3}. Thus it follows that if action $\bs{u}_1$ is preferred over action $\bs{u}_2$ for any state $x$, then $\bs{u}_1$ will also be preferred over action $\bs{u}_2$ for any state $y<x$, $y\in \{1,2,\ldots,B-T+1\}$. Finally note that it follows from the Bellman equation~\eqref{eq:1} and~\eqref{sf}, that the optimal action for states $x>B-T+1$ is to let $E=0$ (since any packet that is received will be lost due to buffer over flow). The proof for variations in power levels is similar. Thus it follows from the definition of a threshold policy that the optimal policy is of threshold type.

Finally note that the statement regarding Blackwell optimality follows from the result in the above paragraph, and because the state-space is finite.
\end{proof}
\section{Solution of Primal MDP}
We now present the solution of the Primal Problem.
\begin{lemma}\label{l4}
$D(\lambda_E) = \sum_n V_n(\lambda_E)-\lambda_E \bar{E}$.
\end{lemma}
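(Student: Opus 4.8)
The plan is to prove the identity by establishing the two opposing inequalities $D(\lambda_E)\ge\sum_n V_n(\lambda_E)-\lambda_E\bar E$ and $D(\lambda_E)\le\sum_n V_n(\lambda_E)-\lambda_E\bar E$ and then concluding equality. The lower bound will come essentially for free from Lemma~\ref{l1}, while the reverse inequality will require exhibiting one admissible joint policy that attains the bound exactly.

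For the lower bound I would start from the right-hand side of Lemma~\ref{l1}, which has the form $\min_\pi\sum_n g_n(\pi)-\lambda_E\bar E$, where $g_n(\pi):=\liminf_{t}\frac{1}{t}\mathbb{E}\sum_{s=1}^t\big(O_n(s)+\lambda_E E_n(s)+\lambda_{O,n}|O_n(s)(O_n(s-1)-1)|+\sum_{q}\lambda_{q,n}I_n(q,s)\big)$. The key observation is that each $g_n$ depends on the joint policy $\pi$ only through the action stream applied to client $n$, since the lone inter-client coupling, the average power budget, has already been absorbed into the objective by the dualization. Hence any joint policy induces a valid single-client policy for client $n$, so $g_n(\pi)\ge V_n(\lambda_E)$ by the definition~\eqref{scopol}, giving $\sum_n g_n(\pi)\ge\sum_n V_n(\lambda_E)$ for every $\pi$ and therefore $\min_\pi\sum_n g_n(\pi)\ge\sum_n V_n(\lambda_E)$. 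Combined with Lemma~\ref{l1} this yields $D(\lambda_E)\ge\sum_n V_n(\lambda_E)-\lambda_E\bar E$.

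For the reverse inequality I would take the product policy $\pi^\star:=(\pi_1^\star(\lambda_E),\ldots,\pi_N^\star(\lambda_E))$ in which each client runs its individually optimal single-client policy furnished by Theorem~\ref{t1}. Because the clients interact only through the dualized power budget, this product policy is admissible and makes the $N$ buffers evolve as independent Markov chains. The crucial point is that each $\pi_n^\star(\lambda_E)$ is \emph{stationary}, indeed threshold, on a finite state-action space, so the per-client long-run average of its integrand is a genuine Ces\`aro limit rather than merely a $\liminf$; thus $\lim_t\frac{1}{t}\mathbb{E}\sum_s\big(O_n+\lambda_E E_n+\cdots\big)=V_n(\lambda_E)$. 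Since each of the two $\limsup$ terms in the Lagrangian~\eqref{lagrange} therefore converges under $\pi^\star$, I can replace each $\limsup$ by a limit, interchange it with the finite sum over $n$, and recombine the objective and power contributions to obtain $\mathcal{L}(\pi^\star,\lambda_E)=\sum_n V_n(\lambda_E)-\lambda_E\bar E$. Minimizing the left side over $\pi$ gives $D(\lambda_E)\le\mathcal{L}(\pi^\star,\lambda_E)=\sum_n V_n(\lambda_E)-\lambda_E\bar E$, and combining with the lower bound completes the proof.

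I expect the main obstacle to sit in the upper-bound direction, namely justifying that the two separate $\limsup$ operations in the Lagrangian collapse to limits and commute with the sum over clients. This rests entirely on the existence of a stationary average-cost optimal policy delivered by Theorem~\ref{t1}: stationarity on a finite MDP guarantees convergence of the Ces\`aro averages of both the QoE cost and the transmitted power, which is exactly what upgrades the $\liminf$ in the definition of $V_n(\lambda_E)$ to a true limit and permits a termwise evaluation of $\mathcal{L}(\pi^\star,\lambda_E)$. A secondary point I would make explicit is that the decoupling invoked in both directions is legitimate precisely because the only inter-client coupling has been removed into the Lagrangian, so that product policies are admissible and the total cost is additive across clients.
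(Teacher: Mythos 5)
Your proposal is correct and follows essentially the same route as the paper: the lower bound comes from Lemma~\ref{l1} together with the fact that each per-client $\liminf$ cost under any joint policy is at least $V_n(\lambda_E)$, and the upper bound comes from evaluating the Lagrangian at the stationary product policy $\otimes_n\pi_n^{\star}(\lambda_E)$, under which the $\limsup$'s become genuine limits. Your write-up is somewhat more explicit than the paper's (which compresses both directions into a single chain of inequalities), but the ideas are identical.
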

\begin{proof}
Let $\pi^{\star}(\lambda_E):= \otimes \pi_n^{\star}(\lambda_E)$ be the policy obtained by following the policy $\pi_n^{\star}(\lambda_E)$ for each client $n$. Then from the definition of dual function, Lagrangian~\eqref{lagrange}, cost associated with a policy $\pi$~\eqref{scopol} and Lemma~\ref{l1}, we have
\begin{align}\label{eq:5}
\mathcal{L}(\pi,\lambda_E) \geq D(\lambda_E)  \geq \sum_n V_n(\lambda_E,\pi) -\lambda_E \times \bar{E}.
\end{align}
However since the policy $\pi^{\star}(\lambda_E)$ is stationary, (all the $\liminf$ and $\limsup$  become $\lim$ in the definition of its Lagrangian, and associated rewards in the single-client problem change to $\lim$), we have that
\begin{align*}
\mathcal{L}(\pi^{\star}(\lambda_E),\lambda_E) = \sum_n V_n(\lambda_E) -\lambda_E \times \bar{E},
\end{align*}
which, along with~\eqref{eq:5} gives us $D(\lambda_E) = \sum_n V_n(\lambda_E)-\lambda_E \bar{E}$. 
\end{proof}

\begin{theorem}\label{t2}
Consider the Primal MDP~\eqref{pmdp} and its associated dual problem defined in~\eqref{dual}. There exists a price $\lambda^\star_E$ such that $\left(\pi^{\star}(\lambda^\star_E),\lambda^\star_E\right)$ is an optimal primal-dual pair and thus the policy $\pi^{\star}(\lambda^\star_E)$ solves the Primal MDP. 
\end{theorem}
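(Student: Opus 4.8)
The plan is to show that there is no duality gap and to exhibit a saddle point of the Lagrangian $\mathcal{L}(\pi,\lambda_E)$. First I would assemble the two ingredients already in hand: weak duality, which follows from Lemma~\ref{l1} and the definition~\eqref{dual}, giving $\mathcal{L}(\pi,\lambda_E)\geq D(\lambda_E)$ for every feasible $\pi$ and every $\lambda_E\geq 0$; and the explicit evaluation $D(\lambda_E)=\sum_n V_n(\lambda_E)-\lambda_E\bar{E}$ from Lemma~\ref{l4}, whose minimizer is the decoupled threshold policy $\pi^{\star}(\lambda_E)=\otimes_n\pi_n^{\star}(\lambda_E)$ supplied by Theorem~\ref{t1}. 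Since $D$ is an infimum over $\pi$ of functions that are affine in $\lambda_E$, it is concave and continuous on $[0,\infty)$, so the dual supremum $\sup_{\lambda_E\geq 0}D(\lambda_E)$ is attained at some $\lambda_E^{\star}$ (or the case $\lambda_E^{\star}=0$ is handled separately).

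Next I would study the average power $G(\lambda_E):=\sum_n\mathbb{E}^{\pi_n^{\star}(\lambda_E)}[E_n]$ consumed under the optimal single-client threshold policies. Because a larger $\lambda_E$ penalizes power more heavily, and because Theorem~\ref{t1} makes each $\pi_n^{\star}(\lambda_E)$ a threshold policy whose transmission-success probabilities do not increase as the price rises, $G(\cdot)$ is a nonincreasing function of $\lambda_E$. I would then select $\lambda_E^{\star}$ by complementary slackness: if the unconstrained policy $\pi^{\star}(0)$ already satisfies $G(0)\leq\bar{E}$, then $\lambda_E^{\star}=0$ works; otherwise $G$ decreases through the level $\bar{E}$ and I take $\lambda_E^{\star}$ to be the crossing price.

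The main obstacle is that each $\pi_n^{\star}(\lambda_E)$ is deterministic, so $G(\cdot)$ is piecewise constant and may jump across $\bar{E}$ rather than equalling it; there need be no price at which a deterministic decoupled policy meets the constraint with equality. I would resolve this exactly as in the standard single-constraint CMDP theory (the Beutler--Ross and Altman arguments): at the critical price $\lambda_E^{\star}$ the dual $D$ has a kink, two threshold policies $\pi^{-}$ and $\pi^{+}$ are simultaneously optimal for the single-client problems with power levels straddling $\bar{E}$, and randomizing between them with a mixing weight $\theta\in[0,1]$ makes the achieved power $\theta G(\pi^{-})+(1-\theta)G(\pi^{+})$ sweep continuously through $\bar{E}$; I pick the $\theta^{\star}$ equalizing it to $\bar{E}$. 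Both constituent policies minimize $\mathcal{L}(\cdot,\lambda_E^{\star})$, so their mixture does as well, and it is primal feasible; here $\pi^{\star}(\lambda_E^{\star})$ is to be read as this (possibly randomized) Lagrangian minimizer.

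Finally I would verify the saddle-point inequalities for $(\pi^{\star}(\lambda_E^{\star}),\lambda_E^{\star})$. The inequality $\mathcal{L}(\pi^{\star}(\lambda_E^{\star}),\lambda_E^{\star})\leq\mathcal{L}(\pi,\lambda_E^{\star})$ for all $\pi$ is immediate from Lemma~\ref{l4}, since $\pi^{\star}(\lambda_E^{\star})$ minimizes the Lagrangian at price $\lambda_E^{\star}$. The inequality $\mathcal{L}(\pi^{\star}(\lambda_E^{\star}),\lambda_E)\leq\mathcal{L}(\pi^{\star}(\lambda_E^{\star}),\lambda_E^{\star})$ for all $\lambda_E\geq 0$ follows from complementary slackness: when the mixed policy meets the power constraint with equality the constraint term vanishes and $\mathcal{L}(\pi^{\star}(\lambda_E^{\star}),\lambda_E)$ is independent of $\lambda_E$, while in the slack case $\lambda_E^{\star}=0$ the term $\lambda_E(G-\bar{E})\leq 0$ only lowers the Lagrangian. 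The existence of a saddle point yields strong duality, so the primal optimum equals $D(\lambda_E^{\star})$; since $\pi^{\star}(\lambda_E^{\star})$ is feasible and attains this value, it solves the Primal MDP, which is the claim.
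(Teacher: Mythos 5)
Your proposal is correct in substance but follows a genuinely different route from the paper. The paper's proof is much shorter and more abstract: it observes that a stationary randomized policy is in one-to-one correspondence with the occupation measure it induces on the state--action space, so the Primal CMDP is a linear program; Slater's condition holds because the never-transmit policy consumes zero power (strictly feasible when $\bar{E}>0$); strong duality follows, and the identification of the dual optimum with the decoupled single-client solutions is then read off from Lemma~\ref{l4}. You instead run the constructive Beutler--Ross/Altman argument: weak duality plus the explicit dual from Lemma~\ref{l4}, concavity and coercivity of $D$ to get a maximizing price $\lambda_E^{\star}$, complementary slackness to select it, and randomization between two threshold policies at the kink to meet the power constraint with equality before verifying the saddle-point inequalities. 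Each approach buys something. The paper's LP route gets strong duality essentially for free, but it leaves implicit exactly the issue you confront head-on: the LP optimum is an occupation measure that may correspond to a randomized policy, whereas the theorem names the deterministic product policy $\pi^{\star}(\lambda^{\star}_E)=\otimes_n\pi^{\star}_n(\lambda^{\star}_E)$, which need not be primal feasible at the critical price. Your explicit mixing step repairs this (at the cost of reinterpreting $\pi^{\star}(\lambda^{\star}_E)$ as the randomized minimizer), and in that respect your argument is more complete than the paper's. One caveat: your justification for the monotonicity of the consumed power $G(\lambda_E)$ in the price is not supported by Theorem~\ref{t1}, whose threshold structure is in the buffer state $x$, not in $\lambda_E$; fortunately monotonicity is inessential, since the existence of two optimal policies with powers straddling $\bar{E}$ at the kink already follows from concavity of the piecewise-linear dual over the finite set of stationary deterministic policies, which is the part of the standard theory you actually use.
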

\begin{proof}
We observe that there is a one-to-one correspondence between any stationary randomized policy, and the measure it induces on the state-action space, and thus the Primal MDP can be posed as a linear program~\cite{ergodic1,ergodic2}. Thus it follows from Slater's condition~\cite{Ber87} that for the Primal MDP, strong duality holds if there exists a policy $\pi$ that satisfies the constraints $\limsup_{t\to\infty}\frac{1}{t}\mathbb{E}\sum_n\sum_{s}E_n(s) < \bar{E}$. However the policy which never schedules any packets incurs a net power expenditure of $0$, and thus Slater's condition is true for the Primal MDP if $\bar{E}>0$. The claim of the Theorem then follows from Lemma~\ref{l3}.
\end{proof}
We note that the policy $\pi^{\star}(\lambda^\star_E)$ is a decentralized policy. That is, the decision to choose the video-quality and power-level at each time $t$ for client $n$, i.e., $\left(q_n(t),E_n(t)\right)$ can be taken by client $n$ itself, and doesn't require the AP to co-ordinate the clients. Thus a client $n$ need not know the state values of other clients, $l_m(t)$ for $m\neq n$, nor does the AP need to know the values of $l_n(t)$. Thus the policy is easy to implement.
\subsection{Obtaining $\lambda^{\star}_E$ iteratively in a decentralized fashion}
We note that in order to implement the optimal policy $\pi^{\star}(\lambda^\star_E)$ as in Theorem~\ref{t2}, we need to find the optimal value of the price $\lambda^\star_E$. We iterate on the price $\lambda_E$ using the sub-gradient method~\cite{shor}, and since the problem is concave, the prices converge to the optimal value $\lambda^\star_E$. Moreover the iterations involving price-updates are decentralized, i.e., the clients need only the knowledge of the current price $\lambda_E$ for the iteration.

Now since $ D(\lambda_E)  = \mathcal{L}(\pi^\star(\lambda_E),\lambda_E)$, we have,
\begin{align}\label{partial1}
\frac{\partial D}{\partial \hat{\lambda}_v} =\bar{E}- \mathbb{E}_{\pi^\star(\lambda_E)} \sum_n \tau(n,\pi^\star(\lambda_E)),
\end{align} 
where $\mathbb{E}_{\pi^\star(\lambda_E)} \sum_n \tau(n,\pi^\star(\lambda_E))$ is the expected cost incurred on the power over all the users. This is the total ``congestion" at the AP.
The iteration for $\lambda_E$ is,
\begin{align*}
\lambda^{k+1}_E = \lambda^k_E - \alpha_k g_k,
\end{align*}
where $d_k$ is the sub-gradient evaluated in~\eqref{partial1}.
\section{Fading Channels}\label{fading}
The results in the previous sections can be extended in a straight forward manner to the case of fading channels. Let the channel conditions for client $n$ be described by a Markov process evolving on finitely many states $\{1,2,\ldots,C_n\}$ having a transition matrix $\Pi_n$. The state of client $n$ is described by the vector $\boldsymbol{x}_n(t):=\left(l_n(t),c_n(t)\right)$, where $l_n(t)$ is the play-time duration of the packets present in the buffer at time $t$, and $c_n(t)$ is the channel condition at time $t$. 
If the client $n$ is scheduled a packet transmission of quality $q$ at an power $E$ at time $t$, then the system state at time $t+1$ is $\left(\mathcal{S}(l(t)),\tilde{c}\right)$ with a probability $P_{n,c_n(t)}(q,E)\Pi(c_n(t),\tilde{c})$, while it is $\left(\mathcal{F}(l(t)),\tilde{c}\right)$ with a probability $P_{n,c_n(t)}(q,E) \Pi(c_n(t),\tilde{c})$.

However now the cost associated to an action $\bs{u}$ also depends on the channel condition, i.e.,
\begin{align}\label{cost1}
C_c(\boldsymbol{u}):=\lambda_E E + P_{c}(l,E)\lambda_q ,
\end{align}
and a threshold policy will have a threshold structure for each value of channel condition (as defined in Section~\ref{scp}). 
\section{Concluding Remarks}
We have formulated the problem of dynamically choosing the qualities and power levels for packet transmissions across unreliable wireless so as to maximize the Quality of Experience of video streaming channels as an MDP. Using Lagrangian techniques, we have shown that the problem exhibits a decentralized solution, wherein the clients can dynamically decide these quantities on their own using their local information, i.e., the channel state and the amount of playtime remaining in their buffers. Thus the optimal policy can be obtained in time linear in the number of users.

Furthermore we have shown that the optimal policy has a threshold structure, thus further reducing the complexity of searching for the optimal policy. Moreover due to the threshold nature of the policy, it is easy to implement.

\end{document}